\newtheorem{theorem}{Theorem}
\newtheorem{definition}{Definition}
\newcommand{\bx}[0]{\boldsymbol{x}}
\newcommand{\by}[0]{\boldsymbol{y}}
\newcommand{\bz}[0]{\boldsymbol{z}}
\newcommand{\bp}[0]{\boldsymbol{p}}
\newcommand{\br}[0]{\boldsymbol{r}}
\newcommand{\bs}[0]{\boldsymbol{s}}
\newcommand{\bn}[0]{\boldsymbol{n}}
\newcommand{\bl}[0]{\boldsymbol{l}}
\renewcommand{\bm}[0]{\boldsymbol{m}}
\newcommand{\bgamma}[0]{\boldsymbol{\gamma}}
\newcommand{\sfa}[0]{\mathsf{A}}
\newcommand{\sfb}[0]{\mathsf{B}}
\newcommand{\sfs}[0]{\mathsf{S}}
\newcommand\tr{\textnormal{tr}}
\def\prg#1{\paragraph*{{\bf #1}}}
\begin{document}

\title{Absolutely Maximal Entanglement in Continuous Variables}

\date{\today}
\author{James~I.~Kwon}
\affiliation{Department of Mathematics, University of Maryland, College Park, Maryland 20742, USA}
\author{Anthony~J.~Brady}
\affiliation{Joint Center for Quantum Information and Computer Science, NIST \& University of Maryland, College Park, Maryland 20742, USA}
\author{Victor~V.~Albert}
\affiliation{Joint Center for Quantum Information and Computer Science, NIST \& University of Maryland, College Park, Maryland 20742, USA}

\begin{abstract}
    We explore absolutely maximal entanglement (AME) and $k$-uniformity in continuous-variable (CV) quantum systems, and show that---unlike in qudit systems---such entanglement is readily realizable in both Gaussian and non-Gaussian quantum states of multiple modes. We demonstrate that Gaussian CV cluster states are generically AME, rederiving the results of [Phys. Rev. Lett. 103, 070501 (2009)] from a generalized stabilizer formalism, and provide explicit constructions based on Cauchy, Vandermonde, totally positive, and real-block-code generator matrices. 
    We further extend AME properties to a family of non-Gaussian states constructed from discrete Zak basis states that incorporate grid states (\textit{a.k.a.}, Gottesman-Kitaev-Preskill states) as non-Gaussian resources. Realizations of CV AME states enable open-destination multi-party CV teleportation, CV quantum secret sharing, CV majority-agreed key distribution, perfect-tensor networks on arbitrary geometries, and multi-unitary circuits. Our extension to non-Gaussian AME states may further provide robustness to Gaussian noise and benefits to quantum CV information processing.
\end{abstract}

\maketitle

Quantum entanglement is necessary for nearly all modern quantum applications.
A highly entangled many-party state cannot give away much information about itself when probed on a small subset of parties.
Absolutely maximal entanglement (AME)~\cite{PhysRevA.62.030301,scott,PhysRevA.77.060304,hsieh2011undetermined,PhysRevA.87.012319} quantifies the extreme case when the state is maximally mixed on all combinations of less than half of the parties, admitting maximal entanglement between each combination and its complement.

AME states cannot boost measurement-based quantum computation~\cite{PhysRevLett.102.190501,PhysRevLett.102.190502}, but they do yield a range of other applications, including
open-destination multi-party teleportation~\cite{HelwigTeleportation,HelwigSecret}, 
quantum secret sharing~\cite{HelwigTeleportation,BaiQSS,HelwigSecret}, 
majority-agreed key distribution~\cite{MAKD},
information masking~\cite{BaiQSS,ShiMasking,ModiMasking},
holographic tensor networks~\cite{Pastawski_2015,Hayden2016randHolographic}, 
multi-unitary circuits~\cite{BertiniProsenExact,Rather_2022,JonayTriunitary},
and quantum error-correcting codes~\cite{scott,PhysRevA.76.042309,raissi2018optimal,raissi2020modifying,huber2020quantum,PhysRevA.103.022402}.

Exact AME is uncommon.
AME states do not exist for several combinations of numbers and dimensions of parties~\cite{borras2007multiqubit,PhysRevLett.118.200502,HuberShadowBound,PhysRevResearch.5.033144,ning2025linear,PhysRevA.102.022413,HuberWyderkaAME}.
Explicit constructions often stem from non-generic objects such as combinatorial designs~\cite{Goyeneche_2015,PhysRevA.97.062326,zang2021quantum}, orthogonal arrays~\cite{PhysRevA.90.022316,PhysRevA.97.062326,PhysRevA.99.042332,zang2023quantum,pang2019two}, maximum distance separable codes~\cite{PhysRevA.103.022402,raissi2018optimal,huber2020quantum}, as well as related algebraic notions~\cite{facchi2010classical,di2013algebraic}. 

Given the benefits of absolutely maximal entanglement in communication, it is imperative to extend this notion to quantum states of electromagnetic waves---the primary medium we use to communicate.
This medium is governed by a pair of continuous variables (CV)---electric and magnetic fields---and such CV systems are being actively studied for both quantum communication~\cite{PirandolaTeleportation,cerf2007quantum,Usenko2025cvComms} and  computing~\cite{Bourassa2021Blueprint,Aghaee2025networkedCluster}.

In this work, we consider AME in CV systems and, furthermore, generalize to $k$-uniform CV states. In contrast to the finite-dimensional (qudit) setting, it has been shown that AME is \textit{generic} among infinitely squeezed Gaussian states ~\cite{Zhang2009MMES} (see also~\cite{Facchi2009MMES}), including CV cluster (\textit{a.k.a.} graph) states~\cite{ZhangBraunstein,Menicucci2006CVcluster,PhysRevA.79.062318,PhysRevA.68.062303}. Building on this foundational insight, our work contributes in two significant directions.

First, we derive necessary and sufficient conditions on the adjacency matrices of Gaussian CV cluster states to be AME, thereby providing \textit{explicit} families of CV AME states. These include constructions based on Cauchy, Vandermonde, totally positive, and real-block-code generator matrices---structures that have not previously appeared in the context of CV entanglement~\footnote{Though see Refs.~\cite{flammia2009optical, Chen2014cv60mode,Pfister2019cvCluster} on previous experimental proposals for generating cluster states.}. Our results highlight that CV cluster states are more powerful entanglement resources than previously appreciated, despite nearly two decades of active investigation~\cite{ZhangBraunstein,Menicucci2006CVcluster,PhysRevA.79.062318,PhysRevA.68.062303}. This new perspective complements prior work on Gaussian entanglement structure~\cite{PhysRevA.70.022318,Adesso2007thesis,Adesso_2007,eisert2008gaussian,PhysRevLett.105.030501,adesso2014continuous,armstrong2015multipartite,PhysRevA.95.010101,lami2018gaussian, Zhang2009MMES}, and connects to the well-known result that Haar-random qudit states become AME in the limit of large local dimension~\cite[Lemma~II.4]{hayden2006aspects,PhysRevLett.71.1291,Hayden2016randHolographic}. 

Second, we extend the study of CV AME and $k$-uniformity to \textit{non-Gaussian} states. We introduce so-called \textit{Zak cluster states}, constructed from discrete Zak basis states~\cite{Zak1967,Englert2006DiscrtZak, Fabre2020dblCylinder} and incorporating grid states (\textit{a.k.a.} Gottesman-Kitaev-Preskill states~\cite{Gottesman2001gkp,Brady2024gkpRvw,Pantaleoni2023ZakGKP}) as fundamental non-Gaussian resources. 

While a general characterization of AME typicality in the non-Gaussian setting remains open, our results provide, to our knowledge, the first explicit construction of non-Gaussian CV AME states. The introduction of non-Gaussianity may also offer robustness to noise in quantum information tasks that rely on AME or $k$-uniformity.  

In the rest of the manuscript, we introduce $k$-uniformity in the CV setting, define CV AME states, provide several random and explicit constructions, and outline CV versions of many aforementioned applications of AME states to quantum communication and many-body physics.

\begin{figure}[t]
\includegraphics[width=1\columnwidth]{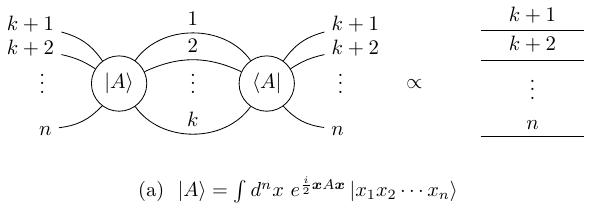}
\includegraphics[width=1\columnwidth]{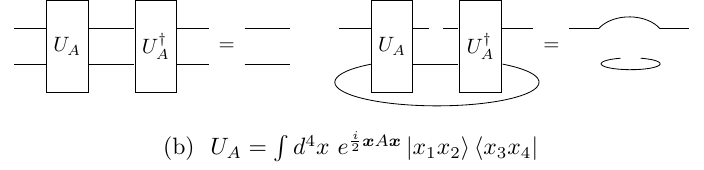}

\caption{\label{fig:tensors}
(a) Tensor-network~\cite{Pastawski_2015,Hayden2016randHolographic} depiction of \(k\)-uniformity, the condition that reduced states on any \(k\) parties are maximally mixed (i.e., proportional to the identity).
States with \(k = \lfloor n/2 \rfloor\) are absolutely maximally entangled.
(b) An \((n/2)\)-uniform state on even-\(n\) parties can be converted into a unitary by toggling half of its kets into bras. 
Uniformity translates into dual-unitarity (right diagram for \(n=4\)), i.e., contracting along ``space-like'' directions yields identity.
}
\end{figure}

\section{CV AME states}
An \(n\)-party (qubit or qudit) quantum state is \(k\)-uniform if its reduced states on any \(k\) parties are maximally mixed.
Absolutely maximally entangled states admit the highest possible uniformity at \(k=\lfloor n/2 \rfloor\). 
Two well-known (\(1\)-uniform) AME families are the \(2\)-party Bell and \(3\)-party GHZ states, existing for any qudit dimension.

CV versions of Bell and GHZ states include two-mode infinitely squeezed states (\textit{a.k.a.}\@ EPR pairs~\cite{PhysRev.47.777}) and three-mode GHZ states~\cite{PhysRevLett.84.3482,van2002quantum}, respectively.
These can be written as 
\begin{equation}\label{eq:ghz}
    |\text{GHZ}(n)\rangle = \int_{\mathbb{R}} dx~  |x\rangle^{\otimes n}    
\end{equation}
for \(n=2,3\), respectively, where \(|x\rangle\) is a single-mode position vector (i.e., an infinitely position-squeezed vacuum state displaced by \(x\)).
None of these ``states'' are normalizable, but the two- and three-mode cases can be thought of as CV analogues of qudit AME states since their single-mode reduced density matrices are all proportional to the infinite-dimensional identity matrix---the  limit of the maximally mixed state at infinite qudit dimension.

We define a CV vector to be \textit{CV \(k\)-uniform} if its reduced matrices on any \(n-k\) modes are proportional to the identity matrix, and \textit{CV AME} if \(k= \lfloor n/2 \rfloor\). 
CV \(k\)-uniform vectors are never normalizable, but there exist close approximations in the form of finitely squeezed states~\cite{PhysRevLett.112.120504,PhysRevA.83.042335,PhysRevLett.84.3482,van2002quantum,PirandolaTeleportation,PhysRevLett.95.150503}.
Such normalizable states retain the \(k\)-uniformity condition, up to small corrections in the squeezing parameter (see Supplementary Material).

\prg{CV MDS states}
It is not difficult to see that CV \(k\)-uniformity is generic.
The GHZ state~\eqref{eq:ghz} is of the form
\begin{equation}\label{eq:mds}
        \ket{G}=\int_{\mathbb{R}^{k}}d\bx\ket{\bx G}~,
\end{equation}
where the superposition is over \(n\)-mode position vectors, and where \(G\) is a real-valued \(k\times n\) matrix.
Tracing out \(k\) subsystems yields a mixed state dependent on a corresponding \(k\)-dimensional submatrix of \(G\), and this state is proportional to the identity if the submatrix has full rank.
The underlying state is CV \(k\)-uniform if and only if all such submatrices have full rank. 
Since random real matrices have full-rank submatrices \cite[pg.~321]{macwilliams1977theory}, random \(|G\rangle\) states are \textit{generically} CV AME.

The matrix \(G\) can be interpreted as a generator matrix of a classical code over the reals (\textit{a.k.a.}\@ real block code), and \(k\)-uniformity is equivalent to the underlying code being maximum distance separable (MDS) ~\cite{singleton1964maximum,roth1985generator}.
The \(1\)-uniform GHZ example~\eqref{eq:ghz} corresponds to \(G = (1,1,\cdots ,1)\)---the generator matrix of the (trivially MDS) repetition code.
Real block codes are generically MDS, and explicit constructions based on Vandermonde matrices~\cite{MarshallRealMDS} and the cosine Fourier transform~\cite{garcia2021improved} are available.
By contrast, generator matrices of binary and \(q\)-ary MDS codes, corresponding to qubit and qudit \(k\)-uniform states~\cite{PhysRevA.103.022402,raissi2018optimal,huber2020quantum}, respectively, are non-generic and subject to decades of study~\cite{macwilliams1977theory,roth2006introduction}.

\prg{CV cluster states}
Our primary examples of generic CV AME states are the CV cluster states,
\begin{equation}\label{eq:graph}
\ket{A}=\int_{\mathbb{R}^{n}}d\bx~e^{i\bx A \bx /2}\ket{\bx}~,
\end{equation}
where \(|\bx\rangle\) is an \(n\)-mode position vector, \(A\) is a real-valued weighed adjacency matrix of the state's underlying graph, and \(\bx A \bx = \sum_{j \ell} x_{j} A_{j \ell} x_{\ell}\).
These states are a CV analogue of qudit cluster states, \(\sum_{\boldsymbol{b}}\omega^{\boldsymbol{b}A'\boldsymbol{b}/2}|\boldsymbol{b}\rangle\), where \(\omega\) is a root of unity, \(\boldsymbol{b}\) is an \(n\)-dimensional vector over \(\mathbb{Z}_q\), and \(A'\) is a \(q\)-ary matrix~\cite{PhysRevLett.86.910,PhysRevA.69.062311,hein2006entanglement}.
We also define analogous cluster states on rotors~\cite{bermejo2014normalizer,Albert2017GenPhaseSpace,PhysRevA.110.022402}, whose underlying alphabet is \(\mathbb{Z}\), and demonstrate that \(k\)-uniformity exists in such states in the Supplementary Material.

Tracing out \(k \leq  n/2 \) modes yields a reduced state that depends on an ``off-diagonal'' rectangular submatrix of \(A\) whose \(k\) columns correspond to the traced-out modes and whose \(n-k\) rows correspond to the remaining modes. 
The reduced state is proportional to the identity if and only if this submatrix has full rank, and a CV cluster state is \(k\)-uniform if and only if this is true for all subsets of \(k\) modes (see Supplementary Material for proofs; cf.~\cite[Thm.~7]{HelwigGraphState} for qudit case).

If the entries of $A$ are chosen independently from a continuous distribution, any off-diagonal $k\times{}(n-k)$ submatrix of \(A\) has full rank with probability one. 
It follows that the random matrix $A$ has the property required for $k$-uniformity with probability one \textit{for any} $k\leq n/2$.
In other words, random CV cluster states are exactly AME with probability one for any fixed \(n\). 
In contrast, being exactly AME for fixed \(n\) is more difficult for qudits since \(q\)-ary matrices have only \(q\) possible entries and thus have full rank less often~\cite{HelwigGraphState,PhysRevA.106.062424,feng2017multipartite,sudevan2022n,PhysRevResearch.2.033411}.

Explicit CV AME cluster states can be constructed from many matrix families associated with complete graphs.
We present four examples of \(A\) matrices, each extendable to any number of modes:
\begin{equation}\label{eq:matlist}
\!\!\left(\begin{smallmatrix}1  &  1  &  1  &  1\\
 1  &  2  &  3  &  4\\
 1  &  3  &  6  &  10\\
 1  &  4  &  10  &  20 
\end{smallmatrix}\right),\left(\begin{smallmatrix}1  &  1/2  &  1/3\\
 1/2  &  1/3  &  1/4\\
 1/3  &  1/4  &  1/5 
\end{smallmatrix}\right),\left(\begin{smallmatrix}1  &  1  &  1\\
 1  &  1/2  &  1/4\\
 1  &  1/4  &  1/16 
\end{smallmatrix}\right),\left(\begin{smallmatrix}\sqrt{2}  &  \sqrt{3}  &  \sqrt{5}\\
 \sqrt{3}  &  \sqrt{7}  &  \sqrt{11}\\
 \sqrt{5}  &  \sqrt{11}  &  \sqrt{13} 
\end{smallmatrix}\right).
\end{equation}

The first matrix in the list~\eqref{eq:matlist} is a symmetric Pascal matrix, whose \((j,\ell)^{\text{th}}\) element is the binomial coefficient \(j+\ell \choose j\).
These elements satisfy Pascal's identity, which is used to prove that their submatrices all have positive determinant \cite[Exam.~2.2.4]{Fallat}.
This is equivalent to the desired criteria, yielding a family of CV AME states.

The second matrix in the list~\eqref{eq:matlist} is a Hilbert matrix with entries \(1/(j+\ell-1)\)---a real-valued example of a Singleton array~\cite{macwilliams1977theory,roth1985generator,maruta1990singleton,raissi2018optimal}.
It is a member of the broader class of Cauchy matrices, which have entries \(1/(v_{j}+w_\ell)\) 
for two real vectors \(\boldsymbol{v},\boldsymbol{w}\) with nonzero increasing entries.
Submatrices of Cauchy matrices are known to have positive determinant~\cite{Pinkus}.
They are further subsumed by Hankel matrices, whose \((j,\ell)^{\text{th}}\) element depends only on \(j+\ell\).
Any Hankel matrix satisfying a simple condition \cite[Thm.~4.4]{Pinkus} yields a CV AME state.
Such states should be realizable using an optical frequency comb~\cite{flammia2009optical,Chen2014cv60mode,Pfister2019cvCluster,Wang2024cvClusterOnchip}.

The third matrix in the list~\eqref{eq:matlist} is a Vandermonde matrix, generally defined by entries \(v_{j}^\ell\). 
Such a matrix yields a CV AME cluster state whenever it is symmetric and \(\boldsymbol{v}\) consists of nonzero increasing entries \cite[Sec.~4.3]{Pinkus}.

These first three matrices are totally positive, meaning any of their submatrices has positive determinant.
This is sufficient for a symmetric matrix to yield a CV AME cluster state.
Other admissible matrices include exponential kernels with elements \(\exp(v_{j} w_{\ell})\),
matrices \(u^{(j-\ell)^2}\) for some \(u\in(0,1)\) \cite[pg.~42]{Pinkus}, and any symmetric matrices for which products of certain elements increase sufficiently quickly \cite[Thm.~2.16]{Pinkus}.
Totally positive \(A\) also yield MDS generator matrices for the states in Eq.~\eqref{eq:mds} for \(n=2k\) by letting \(G = (I_k|A)\), where \(I_k\) is the \(k\)-dimensional identity.

More generally, only ``off-diagonal'' submatrices of \(A\) need to have full rank for the corresponding CV cluster states to be CV AME.
An example satisfying these less stringent criteria is the fourth matrix in the list~\eqref{eq:matlist}---a symmetric matrix whose elements are square-roots of different primes.
Non-symmetric versions of such matrices admit nonzero determinants for all of their submatrices~\cite{stack}.
The same argument applies to the ``off-diagonal'' blocks of the symmetric version, yielding another class of CV AME states.

\prg{Analog stabilizer states}
Both CV MDS and CV cluster states are special cases of infinitely squeezed Gaussian states,
which can be treated in a stabilizer formalism. 
In other words, in the limit of infinite squeezing on all modes, pure Gaussian states become analog (\textit{a.k.a.}\@ symplectic, linear, Gaussian) stabilizer states~\cite{PhysRevLett.80.4088,PhysRevLett.80.4084,barnes2004stabilizer}---\(+1\)-eigenvalue eigenstates of a group \(\mathbb{R}^n\) of commuting displacement operators (see Supplementary Material).

Analog stabilizer group elements (prepended by a phase, which we set to one for simplicity) are of the form \(\exp(i\boldsymbol{r}H\hat{\boldsymbol{\gamma}})\), where \(\br\) is any real \(n\)-dimensional row vector, \(H\) is an \((n\times 2n)\)-dimensional stabilizer generator matrix, and \(\hat{\boldsymbol{\gamma}}=(\hat{\boldsymbol{x}},\hat{\boldsymbol{p}})\) is the \(2n\)-dimensional vector of \(n\) position and \(n\) momentum quadratures.
The group's Lie (\textit{a.k.a.}\@ nullifier~\cite{PhysRevA.79.062318}) algebra is generated by rows of \(H \hat{\boldsymbol{\gamma}}\).
Commutation is ensured by the condition \(H\Omega_{2n}H^\intercal=0\), where \(\Omega_{2n}\) is the usual bosonic symplectic form, and \(\intercal\) denotes transposition.

\begin{definition}
    For a linearly independent set of operators $\{H_i\hat{\bgamma}:1\leq{}i\leq{}k\}$ satisfying $H\Omega_{2n}H^\intercal=0$, the subspace annihilated by it defines an \textnormal{analog stabilizer code}. Such a code is denoted as a $[[n,n-k]]_{\mathbb{R}}$ code. When $k=n$, the code has a single element, and is called an \textnormal{analog stabilizer state}.
\end{definition}

An analog stabilizer state can be written as
\begin{equation}\label{eq:analog}
    \ket{H}=\int_{\mathbb{R}^{n}}d\bx\,e^{i\bx AB^{\intercal}\bx/2}\ket{\bx B}~,
\end{equation}
where \(H = (A|-B)\) is split into position and momentum pieces.
CV cluster states~\eqref{eq:graph} have \(G = (A | -I_n)\).
CV MDS states~\eqref{eq:mds} have \(B = \left(\begin{smallmatrix}G\\
\boldsymbol{0}
\end{smallmatrix}\right)\) and \(A=\left(\begin{smallmatrix}\boldsymbol{0}\\
h
\end{smallmatrix}\right)\), where \(h\) is an \((n-k)\times n\) parity-check matrix of the MDS code generated by \(G\), satisfying \(h G^{\intercal} = \boldsymbol{0}\), and \(\boldsymbol{0}\) denotes the zero matrix (cf.~\cite{raissi2018optimal,raissi2020modifying} for qudit case).

Generic AME in analog stabilizer states can be derived by noting that
every stabilizer state is equivalent to a CV cluster state up to tensor-product Gaussian operations (see Supplementary Material; cf.~\cite{VdNestStabilizerGraph} for qudit case).
Since tensor-product Gaussian operations do not change a state's entanglement, and cluster states are generically AME, \(n\)-mode analog stabilizer states are as well. The fact that the infinite squeezing limit of Gaussian states generically have this maximal entanglement property has been established in \cite{Zhang2009MMES}, where such states are referred to as maximally multipartite entangled states (MMES). Our result rederives this fact, working directly with the nullifier algebra. For comparison, random \(n\)-qubit stabilizer states, while being on average close to maximally entangled over any bipartition \cite[Thm.~III.2]{PhysRevA.74.062314}\cite{PhysRevLett.125.241602}, are never AME for large $n$ \cite{scott}.

An alternative argument can be made by adapting the method to obtain \(k\)-uniformity from the stabilizer generator matrix of qubit states~\cite{majidy_scalable_nodate}.
Along similar lines, the pure (\textit{a.k.a.} diagonal) distance~\cite{PhysRevA.90.062304,PhysRevA.78.042303,wagner2022pauli}---the minimum weight of a stabilizer---is a proxy for uniformity.
Code basis states of an analog stabilizer code with pure distance \(k+1\) are all \(k\)-uniform.
For example, analog surface code~\cite{PhysRevA.78.052121} states are \(3\)-uniform, while the Braunstein five-mode code~\cite{PhysRevLett.80.4084,PhysRevA.81.062305} yields \(2\)-uniform AME states.

\prg{Zak cluster states}
While Gaussian states are easy to prepare, they cannot provide protection against Gaussian error models~\cite{Niset2008nogo}. This motivates the search of non-Gaussian AME states. We first note that in planar rotor systems, we can construct AME states using a similar stabilizer formalism. Zak's \textit{kq} representation~\cite{Zak1967} gives a basis of the bosonic Hilbert space that is analogous to rotors, known as the discrete Zak basis~\cite{Englert2006DiscrtZak} (see also Refs.~\cite{Fabre2020dblCylinder,Pantaleoni2023ZakGKP}). Employing this basis, we construct \textit{Zak cluster states},
\begin{equation}
    \ket{A,P}=\sum_{\bl,\bm\in\mathbb{Z}^n}e^{\frac{i}{2}(\bl{}A\bl+\bm{}P\bm)}\ket{\bl,\bm}
\end{equation}
We show that they are generically AME, in direct analogy with the rotor case; see Supplementary Material for technical details. To the best of our knowledge, this is the first construction of non-Gaussian AME states.  

\section{Applications}

\prg{CV teleportation}
We anticipate the primary application of CV AME states to be multi-party CV teleportation.
We review a CV analogue~\cite{Zhang2009MMES} of a qubit-based open-destination teleportation protocol~\cite{HelwigTeleportation} using CV AME states.
We expect this protocol will find use in realizing a quantum internet~\cite{Rohde2025QuInternet} via CV quantum communication links.
Related protocols include using GHZ~\eqref{eq:ghz} and related states for one-party teleportation~\cite{PhysRevLett.84.3482,van2002quantum} and telecloning~\cite{PhysRevLett.87.247901}, the analog surface code for broadcasting~\cite{PhysRevA.97.032345}, and a \(3\)-uniform \(12\)-mode state for three-party teleportation~\cite{nesterova2025tridirectional}.

An \(n\)-qubit AME state is maximally entangled across any bipartition, meaning that it can be brought into the form where there are two sets of  \(k = \lfloor n/2 \rfloor\) parties, each sharing a Bell state with its partner, and an ancillary disentangled \((n-2k)^{\text{th}}\) party (when \(n\) is odd).
The \(k\) parties then use conventional quantum teleportation  to relay qubit states.
Since this is possible for any combination, the parties can pair up after the AME state is initialized.

The CV protocol proceeds analogously using a Gaussian CV AME state, Gaussian operations, and homodyne measurements.
We outline this protocol using ideal non-normalizable states, but an actual realization will require finitely squeezed versions of the CV AME state; the imperfections of such a substitution are well documented~\cite{PhysRevLett.84.3482,van2002quantum,PirandolaTeleportation,PhysRevLett.95.150503}.

A CV AME state \(|\psi\rangle\) can be brought into a form in which \(k= \lfloor n/2 \rfloor\) pairs of modes each share an ideal two-mode squeezed state~\eqref{eq:ghz},
\begin{equation}\label{eq:activetransformation}
    U_{\sfb}|\psi\rangle\propto \int_{\mathbb{R}^{k}}d\boldsymbol{x}|\boldsymbol{x}_{\sfa},\boldsymbol{x}_{\sfb}\rangle|\phi\rangle
    =|\text{GHZ}(2)\rangle^{\otimes k}|\phi\rangle~,
\end{equation}
where \(\sfa\) consists of one member from each pair, \(\sfb\) collects the remaining members, and \(|\phi\rangle\) is an ancillary state on the remaining \((n-2k)^{\text{th}}\) mode (when \(n\) is odd).
The proportionality symbol is due to constants arising from changes of variables [e.g., \(\int_{\mathbb{R}} dx f(x/2) = 2 \int_{\mathbb{R}} dx f(x)\)], becoming an equality when finitely squeezed normalizable versions of states are used.
The required Gaussian unitary \(U_{\sfb}\) acts on \(\sfb\) and the ancillary mode only, and the \(k\) pairs then perform conventional CV teleportation from \(\sfa\) to \(\sfb\)~\cite{PhysRevLett.84.3482,van2002quantum,PirandolaTeleportation}.

\prg{Majority-agreed key distribution}
Given an $n$-mode CV AME stabilizer state, with each mode held by a separate party, majority-agreed key distribution (MAKD)~\cite{MAKD} allows for secure key distribution between any two parties given the cooperation of a majority, including themselves.

Say two parties $\sfa$ and $\sfb$ want to share a key. If $\sfs$ is a set of parties including $\sfa$ and $\sfb$ such that $|\sfs|\geq{}(n+1)/{2}$, then they can find a nullifier of the state that is independent of the quadratures of modes in possession of the complement $\sfs^\complement$ through Gaussian elimination. 
The parties in $\sfs$ measure their corresponding part of the nullifier, and everyone but $\sfa$ and $\sfb$ announces their results. 
As a result, $\sfa$ and $\sfb$ share a CV key. 
The CV AME property guarantees that the cooperation of a majority is necessary, as there cannot be a nullifier that is local to $\lfloor n/2 \rfloor$ parties or less.

\prg{Information masking \& secret sharing}
CV AME states can be used in CV analogues of quantum information masking schemes. 
An operation from a logical Hilbert space $\mathcal{H}_L$ to an $n$-body Hilbert space, $\otimes_{l=1}^n\mathcal{H}_{A_l}$, is said to $k$-uniformly mask quantum information~\cite{ModiMasking,ShiMasking} in all states if, for any reduction to $k$ parties, the reduced encoded state gives no information about which state in $\mathcal{H}_L$ was encoded. 
As proved for the qudit case~\cite{ShiMasking}, $(k+1)$-uniform states give rise to $k$-uniform masking of quantum information in all states. 
In continuous variables, all states can be $k$-uniformly masked in a $n$-mode system for $k\leq(n-1)/2$. 

CV teleportation and information masking are intimately related to constructing CV quantum secret sharing (QSS) schemes~\cite{LauCVQSS,PhysRevA.95.012315}; a $2m$-mode CV AME state gives rise to an $((m,2m-1))$ threshold CV QSS scheme.

\prg{Perfect tensors \& holography}
AME states for even \(n\) are in one-to-one correspondence with perfect tensors.
Consider an \(n\)-mode CV AME state, \(\int d\boldsymbol{x}\,T(\boldsymbol{x})|\boldsymbol{x}\rangle\), and partition its modes into two sets, \(\sfa\) and \(\sfb\), with their sizes satisfying \(|\sfa| \leq |\sfb|\).
Create an operator \(S\) that maps the smaller set into the larger one by flipping the kets pertaining to \(\sfa\) into bras. Letting \(\bx = (\boldsymbol{y},\boldsymbol{z})\),
\begin{equation}
    S_{\mathsf{A}\to\mathsf{B}}=\int_{\mathbb{R}^{|\mathsf{A}|}}d\boldsymbol{y}\int_{\mathbb{R}^{|\mathsf{B}|}}d\boldsymbol{z}|\boldsymbol{z}_{\mathsf{B}}\rangle\,T(\boldsymbol{y},\boldsymbol{z})\,\langle\boldsymbol{y}_{\mathsf{A}}|~.
\end{equation}
The CV AME condition implies that \(S\) is an isometry (up to a multiplicative constant), meaning that it preserves distances in the smaller space after embedding into the larger one.
This is true for any bipartition, and the tensor \(T(\bx)\) corresponding to the state is called \textit{CV perfect}.
CV cluster states give rise to Gaussian CV perfect tensors (see Supplementary Material).

For instance, extending Ref.~\cite{Pastawski_2015,Hayden2016randHolographic} to the CV case, we may construct a Gaussian holographic state as a tensor network by stitching together many fundamental CV perfect tensors $T$ on a hyperbolic lattice. 
Since CV AME states exist on any number of modes, both random and explicit tensor networks for \textit{arbitrary} hyperbolic lattices are possible, and averaging over random instantiations can be done without a large-\(n\) limit.

Entanglement between two boundary subsystems should obey a holographic area law that depends on the number of bulk modes crossed by a minimal geodesic passing through the bulk, multiplied by an entropy function that depends on the energy per mode~\cite{Cramer2006QHOAreaLaw,Pastawski_2015}.

\prg{Multi-unitarity}
A unitary operator $U$ is dual-unitary~\cite{BertiniProsenExact,Rather_2022} if the realigned operator $U^{R}$ defined by $\bra{j\ell}U\ket{pq}=\bra{q\ell}U^{R}\ket{pj}$ is also unitary.
More generally, tri-unitary operators~\cite{JonayTriunitary}, which are 3-to-3 isometric tensors such that rotating the tensor legs on a plane preserves unitarity, and higher-order analogues may also be defined. 
Dual- and tri- unitary operators correspond to tensors that are isometric across only certain partitions and are thus subsumed by perfect tensors.
CV AME cluster states yield a large family of Gaussian dual- and higher-unitary operators [see Fig.~\ref{fig:tensors}(b) and Supplementary Material].

\section{Conclusion}

We extend the notion of absolutely maximal entanglement (AME) from finite- to infinite-dimensional quantum systems, and find that such entanglement is generic among infinitely squeezed Gaussian states, which include continuous-variable (CV) cluster states, in line with prior work~\cite{Zhang2009MMES}. Our explicit constructions of CV AME states, based on structured families of adjacency matrices, reveal a rich landscape of multipartite entanglement and give rise to simple CV analogues of numerous applications underpinned by discrete-variable AME states.

Moving beyond the Gaussian regime, we introduce \textit{Zak cluster states}, constructed from discrete Zak basis states (related to GKP grid states) and non-Gaussian entangling gates. Our findings thus open the door to systematic study of non-Gaussian AME. Such states may also offer improved robustness to physical (e.g., Gaussian) noise---an important consideration given the no-go theorems for robust quantum information processing with only Gaussian resources~\cite{Niset2008nogo,vuillot}.

The utility of CV AME states is underscored by significant experimental progress in generating large-scale CV cluster states, including demonstrations with over a million modes in bulk optics~\cite{Yoshikawa2016million}, implementations of CV quantum secret sharing~\cite{Cai2017cvQss}, generation of two-dimensional cluster states~\cite{Asavanant2019_2Dcluster}, and more recent advances in chip-integrated~\cite{Wang2024cvClusterOnchip} and networked~\cite{Aghaee2025networkedCluster} settings.

The ubiquity of high entanglement in CV states begs the question of whether CV correlations are more powerful than those in finite dimensions. 
This has been shown for two parties~\cite{coladangelo2020inherently}, but other separations may be possible for three or more.
Related ideas include study of CV stabilizer self-testing~\cite{MAKD,makuta2021self}, Bell-type inequalities~\cite{banaszek1999nonlocality,PhysRevLett.99.210405}, and the marginal problem~\cite{eisert2008gaussian,yu2021complete}.

In future work, it would be valuable to further classify and generalize non-Gaussian CV AME states, explore alternative non-Gaussian resource states beyond GKP and Zak-type constructions, and quantitatively assess the robustness of such states under noise. Hybrid architectures blending both Gaussian and non-Gaussian features may also prove fruitful. These directions offer promising opportunities in CV quantum information processing, Gaussian and non-Gaussian resource theories, and quantum optics.

\begin{acknowledgments}
V.V.A.\@ acknowledges 
Alexander Barg,
Ben Q.\@ Baragiola,
Jonathan \@Conrad,
Steven T.\@ Flammia,
Michael J.\@ Gullans,
Daniel Gottesman,
Patrick Hayden,
and
Shayan Majidy
for helpful discussions.
A.J.B.\@ acknowledges support from the NRC Research Associateship Program at NIST.
V.V.A.~acknowledges NSF grants OMA2120757 (QLCI) and CCF2104489.
V.V.A.\@ thanks Ryhor Kandratsenia for providing daycare support throughout this work.
\end{acknowledgments}

\onecolumngrid
\appendix
\newpage
\section{Supplementary Material}

We provide proofs of our main theorems, along with discussion on infinitely squeezed states and normalizable CV cluster states. 

\subsection{I.~~~CV AME states}

\thref{uniform} is our main result, giving a necessary and sufficient condition on the matrix $A$ for the corresponding cluster state to be $k$-uniform.

\begin{theorem}\thlabel{uniform}
    For $k\leq\frac{n}{2}$, the CV cluster state corresponding to $A$ is $k$-uniform if and only if every $k\times{}(n-k)$ submatrix of $A\in{}M_{n\times{}n}(\mathbb{R})$ obtained by deleting rows in some index set $S$ and columns in $S^\complement$ has full rank.
\end{theorem}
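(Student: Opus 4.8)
The plan is to compute the reduced matrix on an arbitrary subset $K$ of $k$ modes directly in the position representation and read off exactly when it is proportional to the identity. Writing $T=K^\complement$ for the $n-k$ traced-out modes and splitting $\bx=(\bx_K,\bx_T)$, I would form $\rho=\ket{A}\bra{A}$ and integrate over the diagonal $\bx_T=\bx_T'=:\by$. Using the symmetry of $A$ and the block decomposition $A=\left(\begin{smallmatrix}A_{KK}&A_{KT}\\A_{TK}&A_{TT}\end{smallmatrix}\right)$, the quadratic phase in the traced modes cancels except for a term linear in $\by$, so the $\by$-integral is a Fourier transform that collapses to a delta function. Concretely, tracing out $T$ yields
\[
\rho_K\propto\int d\bx_K\,d\bx_K'\,e^{i(\bx_K A_{KK}\bx_K-\bx_K'A_{KK}\bx_K')/2}\,\delta^{(n-k)}\!\big(A_{TK}(\bx_K-\bx_K')\big)\,\ket{\bx_K}\bra{\bx_K'}~,
\]
where $A_{TK}=A_{KT}^\intercal$ is precisely the off-diagonal block selected by the theorem (the $k\times(n-k)$ submatrix $A_{S^\complement,S}$ with $K=S^\complement$ and $T=S$, up to transposition).

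The heart of the argument is then a linear-algebra dichotomy for the argument of the delta function. Since $\mathbb{I}$ on the kept modes equals $\int d\bx_K\,\ket{\bx_K}\bra{\bx_K}$, the reduced matrix is proportional to the identity exactly when the delta function forces $\bx_K=\bx_K'$, i.e.\ when $A_{TK}$ has trivial kernel. Because $A_{TK}$ maps $\mathbb{R}^k\to\mathbb{R}^{n-k}$ with $k\le n-k$, having trivial kernel is the same as having full column rank $k$, which is full rank of the off-diagonal block. For the ``if'' direction, full rank collapses the delta to $\propto\delta^{(k)}(\bx_K-\bx_K')$ and gives $\rho_K\propto\mathbb{I}$. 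For the ``only if'' direction, a nontrivial kernel $W=\ker A_{TK}\neq\{0\}$ lets $\bx_K-\bx_K'$ range over $W$, so $\rho_K$ retains off-diagonal matrix elements (carrying the nonvanishing phase $e^{i(\bx_K A_{KK}\bx_K-\bx_K'A_{KK}\bx_K')/2}$ of unit modulus) that the identity does not possess; since proportional distributions share support, $\rho_K\not\propto\mathbb{I}$. Running this over all subsets $K$ of size $k$---equivalently over all index sets $S$ of size $n-k$---gives the stated equivalence.

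The main obstacle I anticipate is distributional: when $n-k>k$, the symbol $\delta^{(n-k)}(A_{TK}\boldsymbol{u})$ is an $(n-k)$-fold delta evaluated on the $k$-dimensional image of $A_{TK}$, so it is formally over-determined and carries a divergent prefactor $\delta^{(n-2k)}(0)$ characteristic of non-normalizable states. I would handle this by splitting $\mathbb{R}^{n-k}=\mathrm{im}(A_{TK})\oplus\mathrm{im}(A_{TK})^\perp$: the transverse factor produces the state-independent divergent constant that also afflicts the norm of the non-normalizable GHZ state, while the longitudinal factor yields the honest constraint $\propto\delta^{(k)}(\bx_K-\bx_K')$ with Jacobian $1/\sqrt{\det(A_{TK}^\intercal A_{TK})}$, which is nonzero precisely under full rank. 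To make every step rigorous rather than formal, I would rerun the same computation with finitely squeezed Gaussian wavepackets replacing $\ket{\bx}$, so that all integrals converge and $\rho_K$ is a genuine Gaussian state; its reduced covariance matrix approaches a multiple of the identity in the infinite-squeezing limit if and only if the same block is full rank, with $k$-uniformity violated by corrections controlled by the smallest singular value of $A_{TK}$. This regularized version also supplies the ``small corrections in the squeezing parameter'' claimed in the main text and reproduces the delta-function result in the limit.
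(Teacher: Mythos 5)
Your proposal is correct and follows essentially the same route as the paper's proof: computing the partial trace in the position basis, reducing it to the distributional identity $\delta^{(n-k)}\bigl((\bx_K-\bx_K')A_{KT}\bigr)\propto\delta^{(k)}(\bx_K-\bx_K')$ exactly when the off-diagonal block has full rank (the paper handles the divergent $\delta(0)^{n-2k}$ prefactor via a Gaussian-limit regularization and the change of variables $\bz=\sqrt{CC^\intercal}(\bx_S-\bx_S')$, which is equivalent to your image/orthogonal-complement splitting), and arguing the converse from surviving off-diagonal terms when the kernel is nontrivial. Your finite-squeezing regularization is likewise carried out in the paper, though as a separate section rather than inside this proof.
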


\begin{proof}
Assume that the condition on the submatrices of $A$ hold. Fix $S\subseteq[n]$ such that $|S|=k$. We want to show that $\tr_{S^\complement}\ket{A}\bra{A}$ is proportional to the identity. Let
\begin{align}\label{eq:amatrix}
A =~ \begin{blockarray}{ccl}
  S & S^\complement & \\
\begin{block}{(cc)l}
  B & C & ~S \\
  C^\intercal & D & ~S^\complement \\
\end{block}
\end{blockarray}
\end{align}
be the block diagonal form of $A$. Then we have
\begin{subequations}
\begin{align}
\tr_{S^{\complement}}\ket{A}\bra{A}&=\tr_{S^{\complement}}\int_{\mathbb{R}^{n}}d\bx\int_{\mathbb{R}^{n}}d\bx'e^{\frac{i}{2}(\bx A\bx-\bx'A\bx')}\ket{\bx}\bra{\bx'}\\&=\int_{\mathbb{R}^{n-k}}d\by\int_{\mathbb{R}^{n}}d\bx\int_{\mathbb{R}^{n}}d\bx'e^{\frac{i}{2}(\bx A\bx-\bx'A\bx')}\braket{\by|\bx}\braket{\bx'|\by}\\&=\int_{\mathbb{R}^{n-k}}d\by\int_{\mathbb{R}^{k}}d\bx_{S}\int_{\mathbb{R}^{k}}d\bx'_{S}e^{\frac{i}{2}(2(\bx_{S}-\bx_{S}')C\by+\bx_{S}B\bx_{S}-\bx'_{S}B\bx'_{S})}\ket{\bx_{S}}\bra{\bx'_{S}}\\&=\int d\bx_{S}d\bx'_{S}e^{\frac{i}{2}(\bx_{S}B\bx_{S}-\bx'_{S}B\bx'_{S})}\int_{\mathbb{R}^{n-k}}d\by e^{i(\bx_{S}-\bx'_{S})C\by}\ket{\bx_{S}}\bra{\bx'_{S}}\ \\&=\int d\bx_{S}d\bx'_{S}e^{i(\bx_{S}B\bx_{S}-\bx'_{S}B\bx'_{S})}(2\pi)^{n-k}\delta^{(n-k)}((\bx_{S}-\bx'_{S})C)\ket{\bx_{S}}\bra{\bx'_{S}}~.
\end{align}
\end{subequations}
Since $C$, being an appropriate $k\times(n-k)$ submatrix of $A$, has full rank, $CC^\intercal$ is a positive definite $k\times{}k$ matrix. Now treat the Dirac delta as a limit of Gaussians of width \(\Delta\) and let $\bz=\sqrt{CC^\intercal}(\bx_S-\bx'_S)$. Then
\begin{subequations}
\begin{align}
\delta^{(n-k)}((\bx_S-\bx'_S)C)&=\lim_{\Delta\rightarrow{}0}\frac{1}{(\sqrt{2\pi\Delta})^{n-k}}e^{-\frac{1}{\Delta}(\bx_S-\bx'_S)CC^\intercal(\bx_S-\bx'_S)}\\
&=\lim_{\Delta\rightarrow{}0}\frac{1}{(\sqrt{2\pi\Delta})^{n-k}}e^{-\frac{1}{\Delta}\bz\cdot\bz}\\
&=\delta^{(n-k)}(\bz,\boldsymbol{0})\\
&=\frac{\delta(0)^{n-2k}}{\sqrt{\det{}CC^\intercal}}\delta^{(k)}(\bx_S-\bx'_S)~,
\end{align}
\end{subequations}
where \(\mathbf 0\) is the zero vector of \(n-2k\) dimensions such that \((\bz,\boldsymbol{0})\) is of dimension \(n-k\).
The partially traced state is 
\begin{subequations}
\begin{align}
\tr_{S^\complement}\ket{A}\bra{A}&=\frac{\delta(0)^{n-2k}(2\pi)^{n-k}}{\sqrt{\det(CC^\intercal)}}\int d\bx_Sd\bx'_Se^{i(\bx_SB\bx_S-\bx'_SB\bx'_S)}\delta^{(k)}(\bx_S-\bx'_S)\ket{\bx_S}\bra{\bx'_S}\\
&=\frac{\delta(0)^{n-2k}(2\pi)^{n-k}}{\sqrt{\det(CC^\intercal)}}\int{}d\bx_S\ket{\bx_S}\bra{\bx_S}~.
\end{align}
\end{subequations}
If there is a $k\times(n-k)$ submatrix of $A$ with rows in $S^\complement$ and columns in $S$ that does not have full rank, there will be nonzero cross terms in the partially traced state due to $\delta^{(n-k)}((\bx_S-\bx'_S)C)$ not being proportional to $\delta^{(k)}(\bx_S-\bx'_S)$. So the inverse holds.

\end{proof}

\subsection{II.~~~Analog stabilizer states $=$ infinitely squeezed Gaussian pure states}

As we will later prove in \thref{Stabgraph}, uniformity extends to general analog stabilizer states, whose form we determine below.
See \cite[Appx. E]{vuillot} for the more general case of the encoding matrix for analog stabilizer codes.

The analog stabilizer states we study can be obtained from the zero position state, \(|\bx=\boldsymbol{0}\rangle\), via a displacement-free Gaussian circuit.
They can be thought of as infinitely squeezed versions of pure non-displaced Gaussian states, which are obtained via the same circuit from the vacuum.

Any pure non-displaced Gaussian state can be characterized by the following form of its \(2n\)-dimensional covariance matrix \cite[Prob.~5.6]{serafini2023quantum},
\begin{equation}
    \boldsymbol{\sigma} = O^\intercal D O \quad\quad\text{with}\quad\quad D=\begin{pmatrix}\Lambda & \boldsymbol{0}\\
\boldsymbol{0} & \Lambda^{-1}
\end{pmatrix}~,    
\end{equation}
where \(O\) is an orthogonal symplectic matrix that performs a rotation on the \(2n\) quadratures, and where \(\Lambda \geq \boldsymbol{0}\) is a diagonal matrix whose eigenvalues characterize the degree of squeezing of the quadratures.

In the limit of infinite squeezing, the variance in half of all quadratures vanishes, and the variance in the other half goes to infinity.
As such, the top left block of \(D\) represents the infinite-variance quadratures, and the bottom right block represents the zero-variance ones.
The latter nullifying quadratures define the state and are obtained by acting by the \((n\times 2n)\)-dimensional bottom part of \(O\) on the original quadrature vector \(\hat{\bgamma}=\begin{pmatrix}
    \hat{\bx}\\
    \hat{\bp}
\end{pmatrix}\), yielding
\begin{equation}
    H = O_2\quad\quad\text{where}\quad\quad O=\begin{pmatrix}O_{1}\\
O_{2}
\end{pmatrix}~.
\end{equation}

Rows of the stabilizer generator matrix \(H\) form the basis vectors of a ``symplectic'', or ``compatible'', hyperplane whose generator matrix satisfies
\begin{equation}
    H\Omega_{2n}H^\intercal=0~,\quad\quad\text{where}\quad\quad \Omega_{2n}=\left(\begin{smallmatrix}0 & I_{n}\\
-I_{n} & 0
\end{smallmatrix}\right)
\end{equation}
is the bosonic symplectic form.
The Lie algebra of the stabilizer group, which consists of the nullifiers of the state, is generated by the entries of \(H \boldsymbol{\hat{\gamma}}\).

For each state, orthogonal Gaussian transformations rotate the \(n\) real quadratures into each other and leave the hyperplane state invariant.
Such transformations form an orthogonal subgroup \(O(n)\) of symplectic orthogonal \(2n\)-dimensional transformations, which in turn form the group \(\text{Sp}(2n,\mathbb{R}) \cap O(2n) \cong U(n)\)~\cite[Prob.~5.6]{serafini2023quantum}.
As such, the manifold of compatible hyperplanes, also known as the Lagrangian Grassmannian~\cite{Arnold1967}, is the homogeneous space $U(n)/O(n)$.

In general, any group element can be multiplied by an arbitrary phase.
Extending to non-unity phases can be done by applying displacements, and such states can be created by applying a displacement-free Gaussian circuit to displaced position states.
The resulting set of analog states encompasses all infinitely squeezed pure Gaussian states. In this case, states are eigenstates of the group's Lie algebra with a possibly non-zero eigenvalue which characterizes the amount of displacement.
Our results about the ubiquity of AME hold for the more general displaced case since displacements are tensor-product operations that do not change entanglement.
For simplicity, we assume all stabilizer-group phases are unity from now on, with the more general case yielding displaced versions of the states we study below.

With all phases of stabilizer generators set to unity, analog stabilizer states can be written in the following form.

\begin{theorem}
    For a matrix $H=\begin{pmatrix}
        A & -B
    \end{pmatrix}$ with $A,B\in{}M_{n\times{}n}(\mathbb{R})$ such that $H$ has full rank and $H\Omega_{2n}H^\intercal=0$, the analog stabilizer state nullified by the Lie algebra generated by $\{H_i\hat{\bgamma}:1\leq{}i\leq{}n\}$ has the form 
    \begin{align}
        \ket{H}=\int_{\mathbb{R}^n} d\br\ket{\br{}B}e^{\frac{i}{2}\br{}AB^{\intercal}\br}~.
    \end{align}
\end{theorem}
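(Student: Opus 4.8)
The plan is to verify directly that the claimed integral is invariant under the entire stabilizer group generated by the Weyl (displacement) operators $\exp(i\bs H\hat{\bgamma})$, rather than differentiating the (distributional) wavefunction termwise. Combined with the fact, recorded in the Definition, that a full-rank $k=n$ code annihilates a one-dimensional subspace, exhibiting one nonzero invariant state of the claimed form suffices to identify it as \emph{the} analog stabilizer state.

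First I would extract the algebraic consequence of the compatibility condition. Writing $H=(A\mid -B)$ and computing $H\Omega_{2n}H^{\intercal}=BA^{\intercal}-AB^{\intercal}$, the hypothesis $H\Omega_{2n}H^{\intercal}=0$ is equivalent to the symmetry of $M:=AB^{\intercal}$. This symmetry is precisely what makes the quadratic phase $\tfrac{i}{2}\br AB^{\intercal}\br$ in the ansatz well posed, and it is the single fact that will drive the cancellation at the end.

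Next I would evaluate the action of a generic group element on a single position eigenstate. A stabilizer element is $D(\bs)=\exp\!\big(i[(\bs A)\cdot\hat{\bx}-(\bs B)\cdot\hat{\bp}]\big)$; since $[(\bs A)\cdot\hat{\bx},\,(\bs B)\cdot\hat{\bp}]$ is a $c$-number, Baker--Campbell--Hausdorff splits $D(\bs)$ into a momentum translation $\exp(-i(\bs B)\cdot\hat{\bp})$, which sends $\ket{\br B}\mapsto\ket{(\br+\bs)B}$, a position boost $\exp(i(\bs A)\cdot\hat{\bx})$, which produces a phase, and the BCH phase $e^{-\tfrac{i}{2}(\bs A)\cdot(\bs B)}$. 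Collecting these and using $M=AB^{\intercal}$ yields
\begin{equation}
D(\bs)\ket{\br B}=\exp\!\Big(i\,\bs M\br+\tfrac{i}{2}\,\bs M\bs\Big)\ket{(\br+\bs)B}~.
\end{equation}
The crucial point is that the translation moves $\br B$ by $\bs B$, which always lies in the row space of $B$; this is exactly why no invertibility of $B$ is needed. Acting with $D(\bs)$ on the full ansatz and shifting the integration variable $\br\to\br-\bs$, I would then check that the three quadratic contributions---the original phase $\tfrac12\br M\br$, the boost phase, and the BCH phase---recombine, \emph{using} $\br M\bs=\bs M\br$ from the symmetry of $M$, so that all $\bs$-dependence cancels and $\tfrac12\br M\br$ is reproduced. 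This gives $D(\bs)\ket{H}=\ket{H}$ for every $\bs\in\mathbb{R}^{n}$, equivalently annihilation by the nullifier algebra $H\hat{\bgamma}$, and by the uniqueness noted in the Definition the proof is complete.

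I expect the main obstacle to be entirely in choosing the right formulation rather than in any hard estimate: because these states are non-normalizable, all manipulations are distributional, and the genuine difficulty is that $B$ may be singular, so the naive reduction $\bx=\br B$ to a cluster state with adjacency $B^{-1}A$ is unavailable. Working with the finite group action sidesteps this, since every displacement keeps $\br B$ inside the support of the state, and the change of variables $\br\to\br-\bs$ is a genuine bijection of $\mathbb{R}^{n}$ regardless of the rank of $B$.
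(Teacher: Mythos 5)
Your proposal is correct, and while its computational engine (the BCH factorization of $e^{i\bs H\hat{\bgamma}}$ and the identification of $H\Omega_{2n}H^\intercal=0$ with the symmetry of $M=AB^\intercal$) coincides with the paper's, the logical organization is genuinely different. The paper is constructive: it forms the group average $\int d\br\,e^{i\br H\hat{\bgamma}}$, factors each element as $e^{\frac{i}{2}\br AB^\intercal\br}\,e^{-i\br B\hat{\bp}}\,e^{i\br A\hat{\bx}}$, and applies this ``codespace projector'' to the zero-position state $\ket{\bx=\boldsymbol{0}}$, so the formula is \emph{derived}, and its invariance is inherited from the heuristic that the group average is a projector (squaring to itself only up to $\delta(0)$ factors) rather than being checked. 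You instead posit the formula, compute $e^{i\bs H\hat{\bgamma}}\ket{\br B}=e^{i\bs M\br+\frac{i}{2}\bs M\bs}\ket{(\br+\bs)B}$, verify invariance via the substitution $\br\to\br-\bs$ together with the symmetry of $M$, and then invoke the one-dimensionality of the $k=n$ code space recorded in the Definition. The two arguments are nearly dual: your displacement formula evaluated at $\br=\boldsymbol{0}$ is exactly the paper's integrand, and your change of variables is precisely the translation invariance that underlies the paper's projector property. What your route buys is that nullification is established explicitly, with no projector heuristic and no choice of fiducial state; what the paper's buys is that the formula is discovered rather than guessed. One point you should make explicit: your uniqueness step needs the ansatz to be a \emph{nonzero} vector, which is not obvious when $B$ is singular, since $\br\mapsto\br B$ then has positive-dimensional fibers over which the oscillatory integral could conceivably cancel. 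It does not: if $\boldsymbol{t}B=0$ then $\boldsymbol{t}M\boldsymbol{t}^\intercal=\boldsymbol{t}A(\boldsymbol{t}B)^\intercal=0$, and the cross terms $\br M\boldsymbol{t}^\intercal$ and $\boldsymbol{t}M\br^\intercal$ vanish as well (the latter using symmetry of $M$), so the phase $e^{\frac{i}{2}\br M\br}$ is constant on each fiber and the fiber integrals contribute only a positive (infinite) multiplicity, exactly as in the paper's own $\delta(0)$ bookkeeping.
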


\begin{proof}
The stabilizer group is given as $
    \{e^{i\br{}H\hat{\bgamma}}:\br\in\mathbb{R}^n\}
$ for $\hat{\bgamma}$.
To determine the Lie algebra structure of the stabilizer group, we compute the commutators 
\begin{subequations}
\begin{align}
    [\br{}B\hat{\bp},\bs A\hat{\bx}]&=\sum_{l,m}[(B^{\intercal}\br)_{l}\hat{p}_{l},(A^{\intercal}\bs )_{m}\hat{x}_{m}]\\&=\sum_{l,m}(B^{\intercal}\br)_{l}(A^{\intercal}\bs )_{m}(-i\delta_{l,m})\\&=-i\bs  AB^{\intercal}\br~.
\end{align}
\end{subequations}
Note that the condition $H\Omega_{2n}H^\intercal=0$ ensures that the stabilizer group is abelian. 
Continuous linear combinations of all stabilizer group elements form what is akin to a codespace projector, which squares to itself up to infinities of the form \(\delta(0)\):
\begin{subequations}
\begin{align}
    \int d\br{}e^{i\br{}H\hat{\bgamma}}&=\int d\br{}e^{-\frac{1}{2}[\br{}B\hat{\bp},\br{}A\hat{\bx}]}e^{-i\br{}B\hat{\bp}}e^{i\br{}A\hat{\bx}}\\&=\int d\br{}e^{\frac{i}{2}\br{}AB^{\intercal}\br}e^{-i\br{}B\hat{\bp}}e^{i\br{}A\hat{\bx}}~.
\end{align}
\end{subequations}
Applying this operator to the zero position state and simplifying yields the desired state.
\end{proof}

\thref{Stabgraph} establishes that all analog stabilizer states are equivalent to some cluster state up to local Gaussian operations, and thus the uniformity of their entanglement can be characterized via \thref{uniform}.

\begin{theorem}\thlabel{Stabgraph}
    Every analog stabilizer state can be transformed into a cluster state via tensor-product Gaussian operations.
\end{theorem}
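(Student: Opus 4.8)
The plan is to exhibit two state-preserving moves on the stabilizer generator matrix $H=(A\mid-B)$ and to show that their combined action can always bring $H$ into the cluster form $(\tilde A\mid -I_n)$ with $\tilde A$ symmetric. The first move is left multiplication of $H$ by an invertible $G\in\mathrm{GL}(n,\mathbb{R})$: since $\{e^{i\br(GH)\hat{\bgamma}}\}=\{e^{i\bs H\hat{\bgamma}}\}$ as $\bs=\br G$ ranges over $\mathbb{R}^n$, this merely re-chooses the generators and leaves the stabilized state untouched (it is not even a physical operation). The second move is a tensor-product single-mode Gaussian unitary, implemented by a block-diagonal symplectic $S\in\mathrm{Sp}(2,\mathbb{R})^{\times n}$, under which $H\mapsto HS^{-1}$ while the commutation condition $H\Omega_{2n}H^\intercal=0$ is preserved because $S$ is symplectic. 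The only such transformation I will need is the single-mode Fourier transform, which swaps the position and momentum columns of a chosen mode.

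First I would reduce the theorem to making the momentum block invertible. A short computation gives $H\Omega_{2n}H^\intercal=BA^\intercal-AB^\intercal$, so the commutation condition is equivalent to $AB^\intercal$ being symmetric. If $B$ is invertible, left multiplication by $G=B^{-1}$ yields $B^{-1}H=(B^{-1}A\mid-I_n)$, and multiplying $AB^\intercal=BA^\intercal$ on the left by $B^{-1}$ and on the right by $(B^\intercal)^{-1}$ shows $B^{-1}A=(B^{-1}A)^\intercal$. Hence $\tilde A=B^{-1}A$ is symmetric and $\ket{H}$ is exactly the cluster state of Eq.~\eqref{eq:graph} with adjacency matrix $\tilde A$. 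Thus it suffices to find a tensor-product Gaussian operation that renders the momentum block invertible; the same symmetry argument then applies to the transformed blocks $A',B'$, since local symplectic operations preserve the symmetry of $A'(B')^\intercal$.

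The remaining and central step is to make $B$ invertible, which I would phrase geometrically. The rows of $H$ span an $n$-dimensional isotropic---hence Lagrangian---subspace $L\subset\mathbb{R}^{2n}$, and $\det B\neq0$ is equivalent to $L$ being transverse to the position plane $\{(\bx,\boldsymbol{0})\}$ (if $\br B=0$ with $\br\neq0$, full rank of $H$ forces $\br A\neq0$, giving a nontrivial intersection). For each $T\subseteq[n]$ the coordinate plane $X_T=\mathrm{span}(\{\hat{x}_j\}_{j\in T}\cup\{\hat{p}_j\}_{j\notin T})$ is itself Lagrangian, and single-mode Fourier transforms on the modes in $T$ send $X_T$ to the standard position plane. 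The key lemma is that the $2^n$ charts of Lagrangians transverse to the $X_T$ cover the Lagrangian Grassmannian $\Lambda(n)\cong U(n)/O(n)$ described above: for every $L$ there is a $T$ with $L\cap X_T=0$. Equivalently, among the column selections $M_T$---taking the $j$-th column of $A$ when $j\notin T$ and of $-B$ when $j\in T$---at least one is invertible. Establishing this existence is the main obstacle, since for a generic full-rank $n\times2n$ matrix no such ``one column per conjugate pair'' selection need exist; it is precisely the isotropy of $L$ that guarantees it. I would prove the lemma either by a greedy exchange argument---starting from any invertible $n$-column selection and using isotropy to trade a doubled conjugate pair against a missed one while preserving invertibility---or by invoking the standard affine-chart decomposition of $\Lambda(n)$ (cf.~\cite{Arnold1967}, and the qudit analogue \cite{VdNestStabilizerGraph}). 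Once such a $T$ is found, the Fourier transforms on $T$ produce $H'=(A'\mid-B')$ with $B'$ invertible, and left multiplication by $(B')^{-1}$ completes the reduction to cluster form.
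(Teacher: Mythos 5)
Your proposal is correct in substance and follows the same two-move skeleton as the paper's proof: left multiplication by $\mathrm{GL}(n,\mathbb{R})$ to re-choose nullifier generators, single-mode Fourier transforms, and the observation that once the momentum block $B$ is invertible, $H\Omega_{2n}H^\intercal=0$ forces $B^{-1}A$ to be symmetric, so $B^{-1}H=(B^{-1}A\mid -I_n)$ is the cluster state of Eq.~\eqref{eq:graph}. Where you genuinely differ is in how the central existence statement is discharged. You recast it geometrically---every Lagrangian row space $L$ meets at least one of the $2^n$ coordinate Lagrangian planes $X_T$ trivially---and then defer the proof to either a greedy exchange argument (only sketched) or a citation to the chart covering of the Lagrangian Grassmannian~\cite{Arnold1967}. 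The paper instead proves exactly this statement by elementary linear algebra, and its computation is precisely the isotropy bookkeeping your exchange sketch would need: row-reduce so that $B=\left(\begin{smallmatrix}B_{1}\\ 0\end{smallmatrix}\right)$ with $B_{1}$ of full row rank $k$; choose $S$ with $|S|=k$ and $B_S$ invertible; the isotropy relation $A_2B_1^\intercal=0$ then gives $A_S=-A_{S^\complement}B_{S^\complement}^\intercal(B_S^{\intercal})^{-1}$, so the column space of $A_S$ lies inside that of $A_{S^\complement}$, and full rank of $A_2$ forces $A_{S^\complement}$ to be invertible; Fourier transforms on the modes in $S^\complement$ then yield a block-triangular, hence invertible, momentum block. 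So your citation route is legitimate (the covering fact is standard), but if you want a self-contained argument, the paper's row-reduction computation is the concrete realization of your key lemma. One labeling slip to fix: with your dictionary ($M_T$ takes the columns of $A$ for $j\notin T$ and of $-B$ for $j\in T$, and $M_T$ invertible iff $L\cap X_T=0$), the Fourier transforms must be applied to the modes in $[n]\setminus T$, not $T$; as written, Fourier on $T$ maps $X_T$ to the momentum plane rather than the position plane, so invertibility of $M_T$ would not translate into invertibility of the new momentum block.
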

\begin{proof}
    Cluster states are simply stabilizer states for $H=\begin{pmatrix}
        A & -B
    \end{pmatrix}$ with invertible $B$. 
    If $B$ is invertible, 
    through the left action of $B^{-1}$ on $H$, we may rewrite $H=\begin{pmatrix}
        B^{-1}A & -I_n
    \end{pmatrix}$. 
    This corresponds to a linear change of variables in the integral superposition of the state.
    Also, $AB^\intercal$ must be symmetric since $H\Omega_{2n}H^\intercal=0$, and thus so is $B^{-1}(AB^\intercal)(B^{-1})^\intercal=B^{-1}A$. 

    Next, we consider stabilizer states such that the rank of $B$ is $k<n$. By left actions by matrices in $GL(n,\mathbb{R})$, which preserve the stabilizer state, we can transform $H$ into
    \begin{equation}
        H'=\left(\begin{array}{c|c}
           A_1 & B_1 \\
           A_2 & 0
        \end{array}\right)
    \end{equation}
    where $A_1$, $B_1$ are $k\times{}n$ matrices, $A_2$ is a $(n-k)\times{}n$ matrix, and $B_1$ has rank $k$. Since $H'$ has full rank, so does $A_2$. Choose a set $S\subseteq[n]$ with $|S|=k$ such that the columns of $B_1$ indexed by $S$ are linearly independent. Let $A_S$,$A_{S^\complement}$,$B_S$,$B_{S^\complement}$
    be submatrices of $A_2$ and $B_1$ obtained from the columns in the corresponding index sets. Since $H'\Omega_{2n}H'^\intercal=0$, we must have $A_2B_1^\intercal=A_SB_S^\intercal+A_{S^\complement}B_{S^\complement}^\intercal=0$. And since $S$ was chosen so that $B_S$ is invertible, we have
    \begin{equation}
        A_S=-A_{S^\complement}B_{S^\complement}^\intercal(B_S^{\intercal})^{-1}
    \end{equation}
    So the column space of $A_S$ is a subspace of the column space of $A_{S^\complement}$. Since $A_2$ has full rank, $A_{S^\complement}$ must be invertible.
    
    After the Fourier operation on the modes corresponding to $S^\complement$, the matrix $H'$ is transformed so that the right half is
    \begin{align}
    \begin{blockarray}{cc}
    ~S & S^\complement & \\
    \begin{block}{(cc)}
    ~B_S & C\\
    ~0 & A_{S^\complement}\\
    \end{block}
    \end{blockarray}
    \end{align}
    for some $C$. This matrix is invertible since $B_S$ and $A_{S^\complement}$ are.
\end{proof}

\subsection{III.~~~Finite squeezing}

The analog stabilizer states we work with are not physical, as they cannot be normalized. 
Since we have established that any stabilizer state is equivalent to some cluster state up to local Gaussian operations, we restrict attention to cluster states and compute their Gaussian regularized form. 

Cluster states are obtained by applying the Gaussian operation $e^{i\hat{\bx}{}A\hat{\bx}/2}$ to the product of infinitely-squeezed zero-momentum eigenstates,
\begin{equation}
    |A\rangle=e^{i\hat{\bx}A\hat{\bx}/2}\int_{\mathbb{R}^{n}}d\boldsymbol{x}|\boldsymbol{x}\rangle=e^{i\hat{\bx}A\hat{\bx}/2}|\boldsymbol{p}=\boldsymbol{0}\rangle~,
\end{equation}
where each mode's momentum state is defined by \(|p\rangle=\int_{\mathbb{R}} dxe^{ipx}|x\rangle\).
Their approximate versions substitute finitely squeezed versions of the ideal momentum eigenstates.
Our squeezing parameter is \(\Delta \geq 0\), with \(\Delta = 0\) denoting the ideal non-normalizable limit.
In the limit of small \(\Delta\), such squeezing can be interpreted as the ideal states being damped by an exponential of the \(n\)-mode occupation number operator \(\hat N\). 
Using \cite[Eq. (B3)]{PhysRevLett.112.120504} yields
\begin{subequations}
\begin{align}
    e^{-\Delta(\hat{N}+n/2)}|\boldsymbol{p}=\boldsymbol{0}\rangle&\sim e^{-\Delta\|\hat{\boldsymbol{x}}\|^{2}/2}e^{-\Delta\|\hat{\boldsymbol{p}}\|^{2}/2}|\boldsymbol{p}=\boldsymbol{0}\rangle\\&=e^{-\Delta\|\hat{\boldsymbol{x}}\|^{2}/2}|\boldsymbol{p}=\boldsymbol{0}\rangle\\&=\int_{\mathbb{R}^{n}}d\boldsymbol{x}~e^{-\Delta\|\boldsymbol{x}\|^{2}/2}|\boldsymbol{x}\rangle
\end{align}
\end{subequations}
In this limit, the resulting normalized version of the CV cluster state is then
\begin{align}
\ket{\tilde{A}}\sim\frac{1}{(\Delta\sqrt{\pi})^{n/2}}\int d\bx\,e^{i\bx A\bx/2}e^{-\Delta\|\bx\|^{2}/2}{}\ket{\bx}~.
\end{align}
There exist standard techniques for working with such states \cite{PhysRevLett.112.120504,PhysRevA.83.042335,PhysRevLett.84.3482,van2002quantum,PirandolaTeleportation,PhysRevLett.95.150503}.

We verify that the state is $k$-uniform in the limit $\Delta\rightarrow{}0$.

Similar to the exact case, fix $S\subseteq[n]$ with $|S|=k$ and let $A$ be the block matrix from Eq.~\eqref{eq:amatrix}. 
Ignoring normalization,
\begin{subequations}
\begin{align}
\tr_{S^\complement}\ket{\tilde{A}}\bra{\tilde{A}}=&\int\int{}d\bx_Sd\bx_S'\ket{\bx_S}\bra{\bx_S'}e^{\frac{i}{2}(\bx_S{}B\bx_S-{\bx'_S}B\bx'_S)}e^{-\frac{\Delta}{2}({\bx_S\cdot\bx_S+\bx'_S\cdot\bx'_S)}}\\
&\times\int{}d\by{}e^{i(\bx_S-\bx'_S)C\by}e^{-\Delta\by\cdot\by}~.
\end{align}
\end{subequations}
The innermost Gaussian integral can be evaluated exactly using contour integration,
\begin{align}
    \int{}d\by{}e^{i(\bx_S-\bx'_S)C\by}e^{-\Delta{}\by\cdot\by}
    =\left(\frac{\pi}{\Delta}\right)^{\frac{n-k}{2}}\exp\left[ -\frac{1}{4\Delta}(\bx_S-\bx'_S)CC^\intercal(\bx_S-\bx'_S) \right]~.
\end{align}
So in the limit $\Delta\rightarrow{}0$, the off-diagonal elements in the reduced density operator
\begin{subequations}
\begin{align}
    \tr_{S^\complement}\ket{\tilde{A}}\bra{\tilde{A}}=&\int\int{}d\bx_Sd\bx_S'\ket{\bx_S}\bra{\bx_S'}e^{\frac{i}{2}(\bx_S{}B\bx_S-{\bx'_S}B\bx'_S)}e^{-\frac{\Delta}{2}({\bx_S\cdot\bx_S+\bx'_S\cdot\bx'_S)}}\\
    &\times(\frac{\pi}{\Delta})^{\frac{n-k}{2}}e^{-\frac{1}{4\Delta}(\bx_S-\bx'_S)CC^\intercal(\bx_S-\bx'_S)}~,
\end{align}
\end{subequations}
will be exponentially suppressed with respect to \(1/\Delta\).

\begin{figure}[t]
\centering
  \begin{subfigure}[h]{0.259\textwidth}
      \centering
      \vspace{4mm}
      \caption{}
      \includegraphics[width=\textwidth]{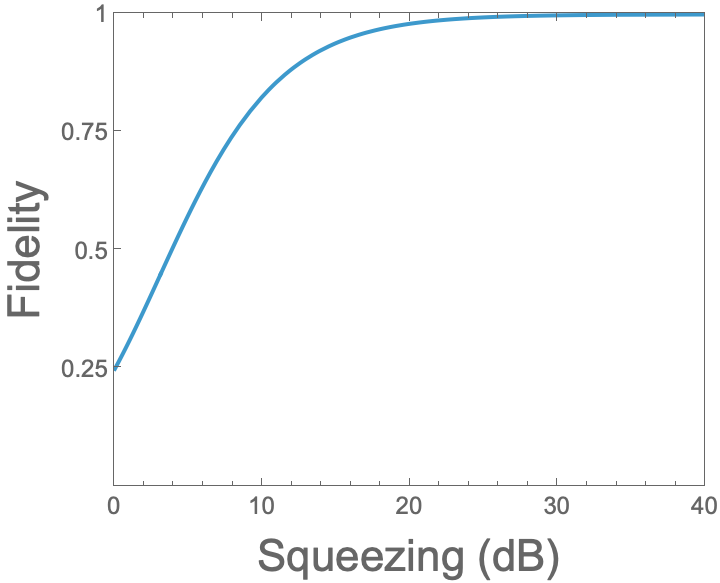}
  \end{subfigure}
  \begin{subfigure}[h]{0.22\textwidth}
    \centering
    \caption{\vspace{0.7mm}}
    \includegraphics[width=\textwidth]{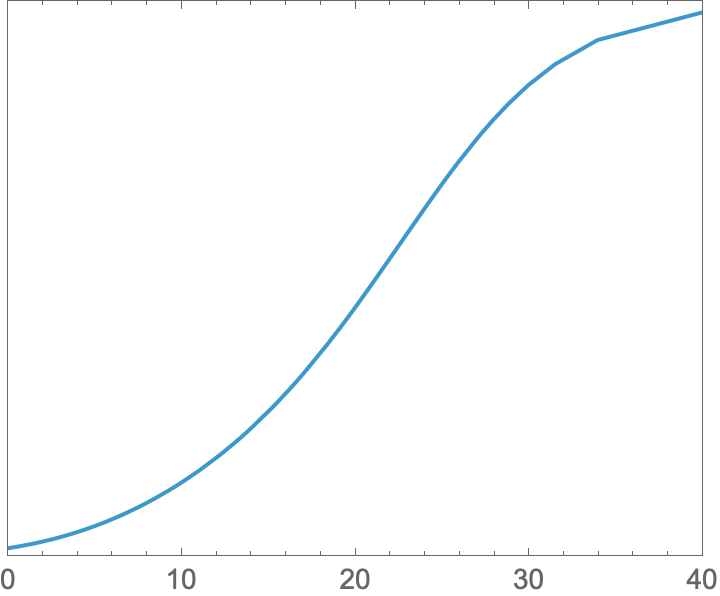}
  \end{subfigure}
  \begin{subfigure}[h]{0.22\textwidth}
    \centering
    \caption{\vspace{0.7mm}}
    \includegraphics[width=\textwidth]{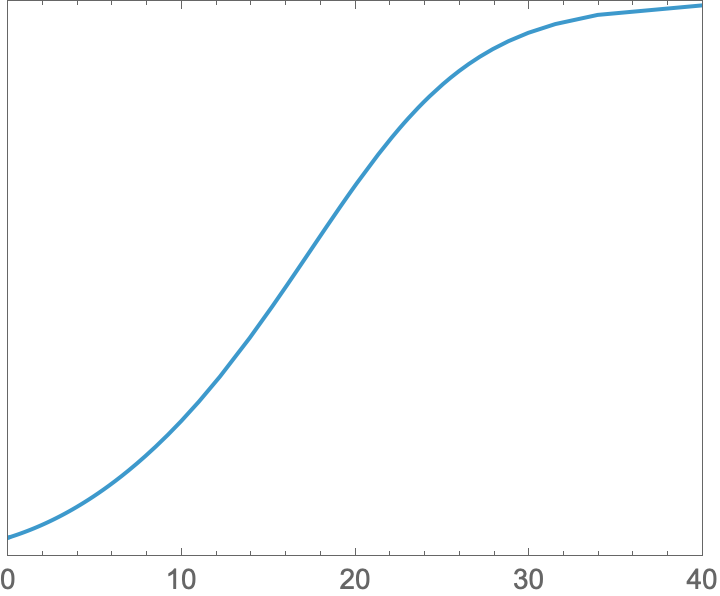}
  \end{subfigure}
  \begin{subfigure}[h]{0.22\textwidth}
    \centering
    \caption{\vspace{0.7mm}}
    \includegraphics[width=\textwidth]{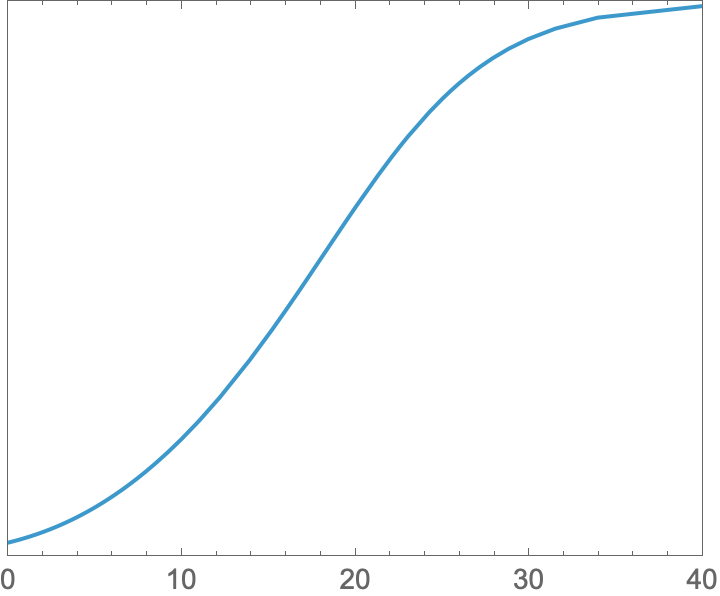}
  \end{subfigure}

\caption{\label{fig:fidelities}
Teleportation fidelity of coherent states in the two-mode protocol for cluster state matrices (a) $\left(\begin{smallmatrix}0 & 0 & 1 & 0\\
0 & 0 & 0 & 1\\
1 & 0 & 0 & 0\\
0 & 1 & 0 & 0
\end{smallmatrix}\right)$, (b) $\left(\begin{smallmatrix}1 & 1 & 1 & 1\\
1 & 2 & 3 & 4\\
1 & 3 & 6 & 10\\
1 & 4 & 10 & 20
\end{smallmatrix}\right)$, (c) $\left(\begin{smallmatrix}1 & 1 & 1 & 1\\
1 & 1/2 & 1/4 & 1/8\\
1 & 1/4 & 1/16 & 1/64\\
1 & 1/8 & 1/64 & 1/512
\end{smallmatrix}\right)$, (d) $\left(\begin{smallmatrix}0 & 1 & 1 & 2\\
1 & 0 & 1 & 3\\
1 & 1 & 0 & 3\\
2 & 3 & 3 & 0
\end{smallmatrix}\right)$ plotted against the uniform squeezing in decibels. Note that teleportation fidelity is dependent on the bipartition, as shown in~\cite{Lupo2011Frustration}. In our examples, the first two modes are chosen to be on one end, and the last two on the other. To achieve $0.999$ fidelity we require squeezing of (a) $33.007$ dB, (b) $52.712$ dB, (c) $47.710$ dB, (d) $48.582$ dB.}
\end{figure}

\subsection{IV.~~~Teleportation Fidelity}

An alternative proof of the approximate AME property can be given by showing that the fidelity of the CV teleportation protocol approaches unity as squeezing increases. As in previous works~\cite{van2002quantum,serafini2023quantum} we consider the teleportation of coherent states as a benchmark. 

For ease of calculation, we approximate cluster states via Gaussian states with uniform squeezing parameters, in contrast to the approximation via a Gaussian envelope considered in the previous section. Figure~\ref{fig:fidelities} plots the teleportation fidelity of the 2-mode teleportation protocol based on some example stabilizer hyperplanes against the amount of squeezing. For each cluster state matrix $A$ and squeezing parameter $e^r$, the approximate cluster state is a zero mean Gaussian state with covariance matrix
\begin{equation}
    \begin{pmatrix}
        (I+A^2)^{-\frac{1}{2}}&-(I+A^2)^{-\frac{1}{2}}A\\
        (I+A^2)^{-\frac{1}{2}}A&(I+A^2)^{-\frac{1}{2}}
    \end{pmatrix}
    \begin{pmatrix}
        e^r{}I&0\\0&e^{-r}I
    \end{pmatrix}
    \begin{pmatrix}
        (I+A^2)^{-\frac{1}{2}}&(I+A^2)^{-\frac{1}{2}}A\\
        -(I+A^2)^{-\frac{1}{2}}A&(I+A^2)^{-\frac{1}{2}}
    \end{pmatrix}~,
\end{equation}
where we use Serafini's convention for the covariance matrix, i.e., $e^r$ instead of $\frac{e^r}{2}$~\cite{serafini2023quantum}.
The Gaussian transformation described in~\eqref{eq:activetransformation} act as a congruence transformation on the covariance matrix~\cite[Prob.~5.6]{serafini2023quantum}. 
The resulting covariance matrix has the variance of the EPR stabilizer quadratures $\hat{x}_1-\hat{x}_3,\hat{x}_2-\hat{x}_4,\hat{p}_1+\hat{p}_3,\hat{p}_2+\hat{p}_4$ suppressed as squeezing increases. Since CV teleportation is a Gaussian protocol, the fidelity for coherent state input can be computed analytically~\cite{van2002quantum}.

The fact that teleportation fidelity approaches unity with increasing squeezing is not in contradiction with Ref.~\cite{Lupo2011Frustration}, where it is determined that finitely squeezed CV AME states cannot be locally thermal. 
We also see that the purity varies between different bipartitions of modes for fixed squeezing, but our results are only concerned with the purity eventually approaching zero, corresponding to a teleportation fidelity of unity. 

Figure~\ref{fig:fidelities}(a) shows the case where we obtain EPR pairs via passive transformations, preserving uniformity of squeezing. We see that in other examples the fidelities are lower for fixed squeezing, but approach unity nonetheless.

\subsection{V.~~~Rotor AME states}
A similar analysis can be done with planar rotor systems~\cite{bermejo2014normalizer,Albert2017GenPhaseSpace,PhysRevA.110.022402}, where the Hilbert space has a canonical basis indexed by $\mathbb{Z}$. 
We define a \textit{rotor cluster state} for a symmetric matrix $C\in{}M_{n\times{}n}(\mathbb{R})$ to be
\begin{equation}
    \ket{C}_{\mathbb{Z}}=\sum_{\bn\in{}\mathbb{Z}^n}e^{i \bn{}C\bn / 2}\ket{\bn}~.
\end{equation}
Note that the entries of $C$ are effectively elements of $\mathbb{R}/2\pi\mathbb{Z}$, and $C$ thus describes a group homomorphism $\psi_{C}:\mathbb{Z}^n\rightarrow\hat{\mathbb{Z}}^n\cong{}U(1)^n$ where $\hat{\mathbb{Z}}$ is the character group of $\mathbb{Z}$. The proof of \thref{uniform} works out to give $k$-uniformity if and only if for any set $S\in{}[n]$ with $|S|=k$, $\pi_{S^\complement}\circ\psi_C\restriction_S$ is injective. Here $\pi_{S^\complement}$ denotes the projection onto the subgroup indexed by $S^\complement$, and $\restriction_S$ the restriction of the map to the subgroup indexed by $S$. We remark that rotor AME states exist for any number of rotors, as a symmetric Vandermonde matrix with integer entries describes an appropriate homomorphism. These rotor cluster states can also be viewed as stabilizer states, with the nullifier algebra being modules over $\mathbb{Z}$, and the matrix $H=\begin{pmatrix}A & -B\end{pmatrix}$ being defined up to left action by $GL(n,\mathbb{Z})$.

Another example of a rotor AME state is any codeword of the five-rotor code \cite{faist2020continuous}---an analogue of the five-mode and five-qubit codes.
Its approximate form is treated in the Supplementary Material of Ref.~\cite{faist2020continuous}.

\subsection{VI.~~~Zak cluster states}

Inspired by the examples of rotor AME states, we present non-Gaussian AME states derived from Zak states~\cite{Zak1967,Englert2006DiscrtZak}. The Zak transform gives a basis of the bosonic Hilbert space indexed by points on a torus, which is called the Zak patch. The continuous Zak basis, denoted as $\ket{\alpha,\beta}$, consists of displaced GKP states~\cite{Gottesman2001gkp,Pantaleoni2023ZakGKP}, while its dual basis, under the double Fourier transform, is the discrete Zak basis indexed by pairs of integers $\ket{\ell,m}$~\cite{Englert2006DiscrtZak,Fabre2020dblCylinder}.

Imposing $2\pi$ periodicity, we write the continuous Zak basis states in the position basis as,
\begin{equation}\label{eq:zakbasis}
    \ket{\alpha,\beta}=\sum_{n\in\mathbb{Z}}e^{i\alpha n}\ket{x=\sqrt{2\pi}\big(n+\beta/2\pi\big)},
\end{equation}
where $\alpha,\beta\in[0,2\pi]$. The continuous Zak basis states are related to the discrete Zak basis states through the double discrete Fourier transform~\cite{Englert2006DiscrtZak},
\begin{equation}\label{eq:discrt-zakbasis}
    \ket{\alpha, \beta}=\sum_{\ell,m=-\infty}^\infty e^{-i(\ell\alpha-m\beta)}\ket{\ell,m}.
\end{equation}
Notably, the Zak basis may be effectively interpreted as two copies of the basis of the rotor Hilbert space.

We may apply a similar construction used for the rotor case to obtain non-Gaussian $k$-uniform Zak cluster states;
\begin{equation}
    \ket{A,P}=\sum_{\bl,\bm\in\mathbb{Z}^n}e^{\frac{i}{2}(\bl{}A\bl+\bm{}P\bm)}\ket{\bl,\bm}
\end{equation}
is $k$-uniform if and only if the appropriately sized off-diagonal blocks of both $A$ and $P$ have full rank. The stabilizer formalism also extends to such ``Zak cluster states"; $\ket{A,P}$ is nullified by the $\mathbb{Z}$-module generated by $\{\sqrt{2\pi}\hat{x}_j-(A\bl)_j:j\in[n]\}\cup\{\sqrt{2\pi}\hat{p}_k-(P\bm)_k:k\in[n]\}$. We remark that these Zak cluster states may be prepared by applying (non-Gaussian) entangling gates of the form $e^{\frac{i}{2}(\bl{}A\bl+\bm{}P\bm)}$ to the initial GKP resource state $\sum_{\bl,\bm}\ket{\bl,\bm}$, which is a uniform superposition over the discrete Zak basis; see Eqs.~\eqref{eq:zakbasis} and~\eqref{eq:discrt-zakbasis}. Note that this entangling gate is quadratic in the integer operators labeled by $\ell$ and $m$~\cite{Englert2006DiscrtZak,Ketterer2016ModVars,Fabre2020dblCylinder,Pantaleoni2023ZakGKP,Kutyniok1998Zakzeros,Enstad2019BalianLow}.
Similarly to the rotor case \cite{faist2020continuous}, normalizable approximations of the Zak cluster states can be shown to be approximately AME.
\subsection{VII.~~~Gaussian multi-unitarity}

We show that a CV AME cluster state on an even number of modes yields a Gaussian multi-unitary operation.
For simplicity, we consider the partition into two subsets \(\sfa\) and \(\sfb\) of \(n/2\) modes.
This yields the following partition on the state's adjacency matrix,
\begin{align}
A =~ \begin{blockarray}{ccl}
  \sfb & \sfa & \\
\begin{block}{(cc)l}
  B & C & ~\sfb \\
  C^\intercal & D & ~\sfa \\
\end{block}
\end{blockarray}~,
\end{align}
for square blocks \(B,C,D\). 
The explicit form of the unitary mapping for \((n/2)\)-dimensional vectors \(\by,\bz\) is then
\begin{subequations}
\begin{align}
T&=\int{}d\bz{}d\by{}e^{i(\bz,\by)A(\bz,\by)/2}\ket{\bz}\bra{\by}\\&=\int{}d\bz{}d\by{}e^{i(\bz{}B\bz+2\bz{}C\by+\by{}D\by)/2}\ket{\bz}\bra{\by}\\&=e^{i\hat{\bx}B\hat{\bx}/2}\cdot\int{}d\bz{}d\by{}e^{i\bz{}C\by}\ket{\bz}\bra{\by}\cdot{}e^{i\hat{\bx}D\hat{\bx}/2}\\&=e^{i\hat{\bx}B\hat{\bx}/2}\cdot\int{}d\by\ket{\bp=C\by}\bra{\by}\cdot{}e^{i\hat{\bx}D\hat{\bx}/2}\\&=e^{i\hat{\bx}B\hat{\bx}/2}\cdot\hat{F}\cdot\int{}d\by\ket{C\by}\bra{\by}\cdot{}e^{i\hat{\bx}D\hat{\bx}/2}
~.
\end{align}
\end{subequations}
In the third equality, we peel off quadratic position dependencies into Gaussian operations acting on each side.
In the fourth equality, we convert the resulting position states into momentum states.
In the last equality, we convert momentum states into position states by a Fourier operator \(\hat F\).
All operations are Gaussian, completing the proof.

\bibliography{ref}

\end{document}